\theoremstyle{definition}
\theoremstyle{plain}
\theoremstyle{plain}
\theoremstyle{plain}
\theoremstyle{plain}
\newtheorem{lem}{Lemma}
\theoremstyle{remark}
\begin{document}
\SetBgContents{}
\title{Incentive Mechanism Design for Cache-Assisted D2D Communications: A Mobility-Aware Approach}
\author{\IEEEauthorblockN{Rui Wang$^*$, Jun Zhang$^*$ and K. B. Letaief$^{*\dag}$, \emph{Fellow, IEEE} }
	\IEEEauthorblockA{$^*$Dept. of ECE, The Hong Kong University of Science and Technology, $^\dag$Hamad Bin Khalifa University, Doha, Qatar\\
		Email: $^*$\{rwangae, eejzhang, eekhaled\}@ust.hk, $^\dag$kletaief@hbku.edu.qa} \thanks{This work was supported by the Hong Kong Research Grants Council under Grant No. 610113.}}
\maketitle
\begin{abstract}
Caching popular contents at mobile devices, assisted by device-to-device (D2D) communications, is considered as a promising technique for mobile content delivery. It can effectively reduce backhaul traffic and service cost, as well as improving the spectrum efficiency. However, due to the selfishness of mobile users, incentive mechanisms will be needed to motivate device caching. In this paper, we investigate incentive mechanism design in cache-assisted D2D networks, taking advantage of the user mobility information. An inter-contact model is adopted to capture the average time between two consecutive contacts of each device pair. A Stackelberg game is formulated, where each user plays as a follower aiming at maximizing its own utility and the mobile network operator (MNO) plays as a leader aiming at minimizing the cost. 
%Specifically, each user determines its cache storage to maximize its own utility, while the MNO determines the caching placement and payment to each user to minimize the overall cost. 
Assuming that user responses can be predicted by the MNO, a cost minimization problem is formulated. Since this problem is NP-hard, we reformulate it as a non-negative submodular maximization problem and develop a $(\frac{1}{4+\epsilon})$-approximation local search algorithm to solve it. In the simulation, we demonstrate that the local search algorithm provides near optimal performance. By comparing with other caching strategies, we validate the effectiveness of the proposed incentive-based mobility-aware caching strategy. 
\end{abstract}
%\begin{IEEEkeywords}
%Caching, device-to-device communications, human mobility, renewal process.
%\end{IEEEkeywords}
\IEEEpeerreviewmaketitle

\section{Introduction}

%The global mobile data traffic is undergoing an exponential growth, which is mainly generated by the mobile video. One promising approach is to utilize the 

To accommodate the exponentially growing mobile data traffic, especially the mobile video, wireless caching recently attracts lots of attentions \cite{poularakis2016code}%\cite{forecast2016cisco}
. By prefetching popular contents at wireless edges, i.e., local access points and mobile devices, mobile users can be served locally without connecting to the core network \cite{bastug2014living}. Accordingly, wireless caching is a promising technique to alleviate the backhaul burden, as well as lowering delays and service cost. Different from caching in wired networks, wireless caching enjoys a unique feature, i.e., the users are mobile. Recently, the user mobility information has been exploited to improve wireless caching. A general framework to design mobility-aware caching strategies was introduced in \cite{magazine}, along with a detailed discussion of different mobility models. Modeling the user mobility by a Markov chain model, a femto-caching strategy was proposed in \cite{poularakis2016code}. Taking the advantage of the temporal correlation of the user mobility, a mobility-aware caching strategy at user devices was proposed in \cite{mobilitycaching}.

Compared with the femto-caching systems, caching at user devices can facilitate device-to-device (D2D) communications. The user requests may be served via proximate D2D links without going through the base station (BS), which can significantly improve the spectrum efficiency and reduce the backhaul burden. However, there exists a new challenge in D2D caching networks. That is, considering the selfish nature of the mobile users, BSs cannot fully control the caching strategy at user devices. Thus, incentive mechanisms are needed to motivate the users to cache. 

To resolve the selfish issue in caching networks, some efforts have been made. In \cite{goemans2006market}, an incentive mechanism is designed to motivate caching at resident subscribers in an ad hoc network. Considering a peer-to-peer (P2P) system, an incentive scheme was proposed in \cite{P2P}, which rewards the peers based on the popularity of contents. However, different from conventional P2P systems, the topology of the D2D cache network is not fixed due to the user mobility. A recent work \cite{chen2016caching} considered a D2D caching network and proposed  an incentive mechanism based on a Stackelberg game, which unfortunately ignored the user mobility. Moreover, another crucial issue in previous studies on incentive mechanism in D2D caching networks is the high overhead incurred  by collecting global information at each user device. 

In this paper, we investigate the incentive mechanism design in cache-assisted D2D networks, while considering the user mobility. To take advantage of the user mobility pattern, an inter-contact model is considered, where the timeline of each pair of devices consists of \emph{contact times} and \emph{inter-contact times}. A pair of user devices are in contact when they are within the transmission range. Then, the \emph{contact times} are the times when two users are in contact, while the \emph{inter-contact times} are the times between two consecutive contact times.  Considering the selfish nature of the mobile users, the mobile network operator (MNO) should provide some incentives to motivate device caching. A Stackelberg game is formulated, where each user plays as a follower aiming at maximizing its utility by controlling its own cache storage, and the MNO plays as a leader aiming at minimizing the overall cost, including the service cost and the payment to users, by controlling the caching placement and unit payment to each user. Assuming that the MNO can predict the responses of the users, a cost minimization problem is formulated, which is an NP-hard problem. By reformulating the problem into a non-negative submodular maximization problem over a matroid constraint, a local search algorithm is developed, which achieves $\left(\frac{1}{4+\epsilon}\right)$-approximation with polynomial runtime. Simulation results show that the local search algorithm provides near-to-optimal performance. Meanwhile, we also demonstrate that the incentive-based mobility-aware caching strategy efficiently offloads the cellular traffic to D2D links and reduces the overall cost.

\section{System Model}

\subsection{User Mobility Model}

As shown in Fig. \ref{mobility}, we consider an MNO, which provides wireless service to mobile users in a macro cell through a macro cell BS (MBS). There are $N_u$ users in the cell, whose index set is denoted as $\mathcal{D}=\{1, 2, ... , N_u\}$. A widely used inter-contact model is adopted to  capture the user mobility pattern, where the timeline of each user pair is divided into \emph{contact times} and \emph{inter-contact times}. The contact times denote the times when two mobile users are within the transmission range and are thus able to exchange data, while the inter-contact times denote the times between two consecutive contact times. As commonly assumed \cite{intercontactmodel}, the arrival of the contact times between user $i \in \mathcal{D} $ and $j \in \mathcal{D}$ is modeled as a Poisson process with intensity $\lambda_{i,j}$, and $\lambda_{i,i}$ is regarded as $+\infty$. Meanwhile, the timelines between different user pairs are assumed to be independent.

\subsection{Caching and File Delivery Model}

\begin{figure}[!t]
  \centering
  \includegraphics[width=3in]{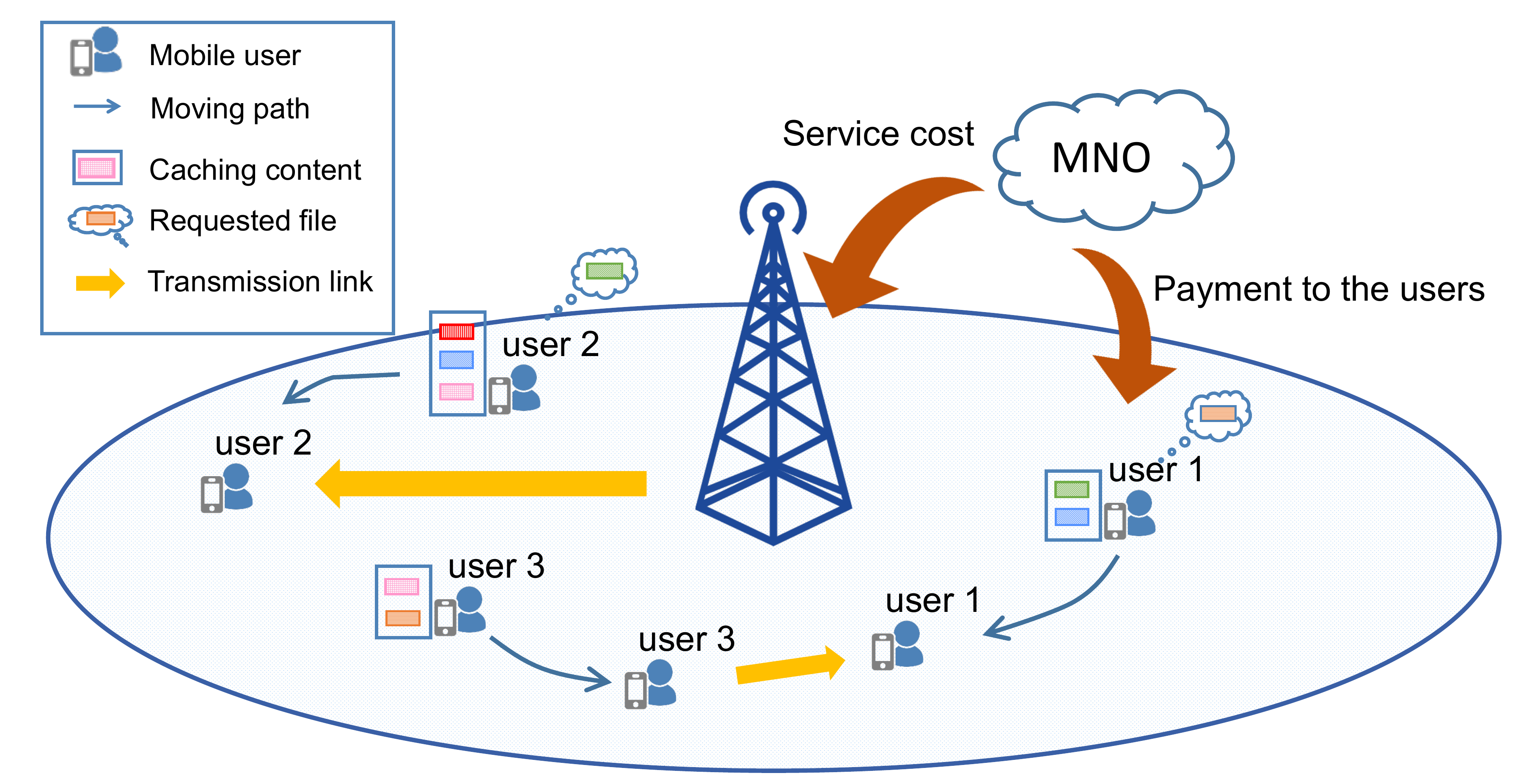}
  \caption{An illustration of the cache-assisted D2D network.}
  \label{mobility}
\end{figure}
A library with $N_f$ files is considered, whose index set is denoted as $\mathcal{F}=\{1, 2,..., N_f \}$. The size of each file is assumed to be $s$ bytes, while the available storage of user $i \in \mathcal{D}$ is $c_i$ bytes. Each file is assumed to be either fully cached or not stored at all at each user device. Specifically, $x_{i,f}=1$, where $ i \in \mathcal{D}$ and $ f \in \mathcal{F}$, means file $f$ is cached at user $i$; otherwise, $x_{i,f}=0$. File $f \in \mathcal{F}$ is requested by user $i \in \mathcal{D}$ with probability $p^r_{i,f}$, where $\sum _{f \in \mathcal{F}} p^r_{i,f}=1$. A user $i \in \mathcal{D}$ requesting file $f \in \mathcal{F}$ can be served via D2D links if it caches file $f$ or encounters other users who cache file $f$ within a certain delay time after the request is generated, denoted as $T^d$. Otherwise, it will be served by the MNO through a cellular link. Offloading user requests to D2D links can reduce the cellular load, and thus, lower the service cost. Meanwhile, it can also facilitate D2D links and improve the spectrum efficiency. An example is shown in Fig. \ref{mobility}. In this case, user 1 requests a file, and it encounters user 3 caching the requested file within the delay time. Then, it is served by user 3 via a D2D link. However, user 2 does not encounter a user caching its requested file, and thus, it is served via a cellular link.

\subsection{Pricing Model}

The utility of user $i$ using $w_i$ proportion of its available storage for its own needs, e.g., taking pictures and downloading apps, is denoted as $u^c_i(w_i)$. According to the law of diminishing returns, the same as \cite{poularakis2016mobile}, we consider the utility  $u^c_i(w_i)$ as a logarithm function given as
\begin{equation} \label{uc}
u^c_i(w_i)=a_i \ln (b_i w_i),
\end{equation} 
where $a_i$ and $b_i$ are parameters for characterizing the user utility. The cost of user $i$ for using $v_i$ bytes for caching is assumed to be linear of $v_i$, and the unit cost is $\rho_i$.  Meanwhile, if user $i$ shares $v_i$ bytes for caching, the MNO will pay $r_i v_i$ for it, where $r_i$ is the unit payment.
\section{Incentive Mechanism Design}
In this section, a Stackelberg game will be formulated. In a Stackelberg game, a leader and several followers compete sequentially on certain recourses. The leader goes first, and then the followers make decisions based on the choice of the leader \cite{myerson2013game}. In this paper, to offload traffic to D2D links and reduce the service cost, the MNO provides payment to the mobile users to motivate device caching. Accordingly, the MNO plays as a leader aiming at minimizing the overall cost, including the service cost and the payment. Considering the selfish nature, each mobile user plays as a follower aiming at maximizing its own utility. Due to the high overhead, it is impractical for each user to acquire the global information, including the pairwise mobility information and the caching placement at other user devices. Thus, we assume that the MNO designs the caching placement and determines the payment based on the global information. Meanwhile, each user determines its cache storage based on its own local information.

\subsection{User Subgame}
The users are regarded as the followers in the Stackelberg game. The overall utility of user $i \in \mathcal{D}$ includes the utility of the storage for its own needs, the cost for caching, and the payment from the MNO, given as
$
U_i(v_i)=u^c_i \left(1-\frac{v_i}{c_i}\right)-\rho_i v_i + r_i v_i.
$
Each user aims at maximizing its own utility, and thus, the user subgame is
\begin{equation}
v^{\star}_i= \arg \max \limits_{v_i \in [0,c_i]} U_i(v_i),
%=\arg \max \limits_{v_i \in [0,c_i]}  a_i \ln \left[b_i \left(1-\frac{v_i}{c_i}\right)\right]-\rho_i v_i + r_i v_i,
\end{equation}
where $i \in \mathcal{D}$. By finding the stationary point, we can get the optimal cache storage for user $i \in \mathcal{D}$ as 
\begin{equation}
v^{\star}_i=
\begin{cases}
c_i-\frac{a_i}{r_i - \rho_i} & \text{if } r_i > \rho_i, \\
0     & \text{if } r_i \le \rho_i.
\end{cases}
\end{equation}

\subsection{Cost Minimization Problem}
The MNO is regarded as the leader in the Stackelberg game. The overall cost of the MNO includes the cost to serve the mobile users via cellular links and the payment to the users. According to \cite{courcoubetis2003pricing}, we assume that the service cost is a function of the proportion of requested files served via cellular links, denoted as $Q(P^c)$, where $P^c$ is the proportion of requested files served via cellular links and $Q(\cdot)$ is assumed to be an increasing convex function. In the following, we will first find an expression for $P^c$.

Let $T^e_{i,j}$, where $i,j \in \mathcal{D}$, denote the time between user $i$ requesting a file and user $i$ encountering user $j$, and then the delay time of user $i$ when requesting file $f$ is given as
$
D_{i,f}= \min \limits_{j \in \mathcal{D} \text{ and } x_{j,f}=1} T^e_{i,j}.
$
Based on the memoryless property of the Poisson process, $T^e_{i,j}$ follows an exponential distribution with parameter $\lambda_{i,j}$. Since the inter-contact times for different pairs of users are assumed to be independent, $D_{i,f}$ is an exponential distributed random variable with parameter
$
\Lambda_{i,f}=\sum \limits_{j \in \mathcal{D} \text{ and } x_{j,f}=1} \lambda_{i,j}=\sum \limits_{j \in \mathcal{D} } x_{j,f} \lambda_{i,j}.
$
Since user $i$ can get the requested file via D2D links when it stores the file in its own cache or the delay time is within time $T^d$, the probability that this user needs to download the requested file via a cellular link is
$
P_i=\sum \limits_{f \in \mathcal{F}} p_f \exp \left(  -\sum \limits_{j \in \mathcal{D} } x_{j,f} T^d \lambda_{i,j}\right) .
$
Thus, the proportion of requested files served via cellular links is given as $P^c=\frac{1}{N_u} \sum\limits_{i \in \mathcal{D}} P_i$. 
Then, the cost minimization problem for the MNO is 
\begin{align}
\min \limits_{x_{i,f},r_i}  & \quad Q(P^c)+\sum \limits_{i \in \mathcal{D}} r_i v_i,  \label{prob_MNOgame} \\
\text{s.t. } & \quad s \sum \limits_{f \in \mathcal{F}} x_{i,f} \le v_i, i \in \mathcal{D} \text{ and } x_{i,f} \in \{0,1\}, \tag{\ref{prob_MNOgame}a}
\end{align}
where constraint (\ref{prob_MNOgame}a) implies the limited cache storage. Assuming that the utility function of the storage for its own needs and the unit caching cost is stable in a period, the MNO can use learning algorithms, e.g., regression analysis \cite{bates1988nonlinear}, to estimate the parameters $a_i$, $b_i$, $\rho_i$ and $c_i$, where $i \in \mathcal{D}$. Accordingly, the MNO can predict the responses of the mobile users. Specifically, if the MNO needs cache storage as $v_i \in [0,c_i]$ bytes of user $i$, the unit payment to user $i$ should be
\begin{equation} \label{price}
r_i=\frac{a_i}{c_i-v_i}+\rho_i,
\end{equation} 
and thus the payment to user $i$ is
\begin{equation}
C^P_i(v_i)=\left(\frac{a_i}{c_i-v_i}+\rho_i \right)v_i.
\end{equation} 
Since the required cache storage of user $i \in \mathcal{D}$ is $s \sum \limits_{f \in \mathcal{F}} x_{i,f} $, the cost minimization problem considering the response of each user is 
\begin{align}
\min \limits_{x_{i,f}} & \quad Q(P^c)+\sum \limits_{i \in \mathcal{D}} C_i^P\left( s \sum \limits_{f \in \mathcal{F}} x_{i,f}   \right), \label{prob_MNO}\\
\text{s.t. } & \quad s \sum \limits_{f \in \mathcal{F}} x_{i,f} \le c_i, i \in \mathcal{D} \text{ and } x_{i,f} \in \{0,1\} .  \tag{\ref{prob_MNO}a}
\end{align}
Considering that problem (\ref{prob_MNO}) is an NP-hard problem, in the following, we will provide an effective sub-optimal algorithm to solve it.

\section{Mobility-Aware Caching Strategy}
In this section, we first reformulate the cost minimization problem as a non-negative submodular maximization problem over a matroid constraint. Then, a local search algorithm is adopted to solve the problem, which achieves a $\left(\frac{1}{4+\epsilon}\right)$-approximation.
\subsection{Problem Reformulation}
Submodular maximization problems belong to an important kind of combinational optimization problems, which have been widely investigated \cite{bordeaux2014tractability, lee2010maximizing}. Some recent works developed caching strategies based on submodular maximization over a matroid constraint \cite{mobilitycaching,femtocachinginfocom}. 
%In the following, we will first give some definitions and key properties of the submodular set function and matroid constraint \cite{bordeaux2014tractability}.

We use $y_{j,f}$, where $j \in \mathcal{D}$ and $ f \in \mathcal{F}$, to denote that file $f$ is cached at user $j$, and define the ground set as $S=\{y_{j,f}| j \in \mathcal{D} \text{ and } f \in \mathcal{F}\}$. Then, each caching placement can be represented as a subset of $S$. Specifically, for a caching placement set $A \subseteq S$, the element $y_{j,f} \in A $ means that user $j$ caches file $f$, i.e., $x_{j,f}=1$, while $y_{j,f} \notin A $ means that user $j$ does not cache file $f$, i.e., $x_{j,f}=0$.
%\begin{equation}
%\begin{cases}
%y_{j,f} \in A & \text{if user $j$ caches file $f$, i.e., $x_{j,f}=1$}, \\
%y_{j,f} \notin A & \text{otherwise, i.e., $x_{j,f}=0$}. \\
%end{cases}
%\end{equation} 
Denote $S_i=\{y_{i,f}| f \in \mathcal{F}\}$, which contains all the files that may be cached at user $i$, and then the constraint (\ref{prob_MNO}a) can be rewritten as a matroid constraint, as shown in Lemma \ref{matroidc}.
\begin{lem} \label{matroidc}
Let $c^n_i=\max \left( 0, \left\lceil \frac{c_i}{s}-1 \right\rceil \right)$, where $i \in \mathcal{D}$. Constraint (\ref{prob_MNO}a) can be rewritten as a matroid constraint, i.e., $Y \in \mathcal{I}$, where
\begin{equation} \label{I}
\mathcal{I}=\left\{A \subseteq S \big| |A \cap S_i| \le c^n_i, \forall i \in \mathcal{D} \right\}.
\end{equation}
\end{lem}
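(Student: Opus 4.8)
The plan is to prove Lemma~\ref{matroidc} in two stages: first translate constraint (\ref{prob_MNO}a) into a cardinality bound on each block $S_i$, and then verify that the resulting set family $\mathcal{I}$ satisfies the matroid axioms (indeed, that it is a partition matroid). First I would rewrite the storage constraint in terms of the set representation. For any placement $A \subseteq S$, the number of files cached at user $i$ equals $|A \cap S_i| = \sum_{f \in \mathcal{F}} x_{i,f}$, because the block $S_i = \{y_{i,f} \mid f \in \mathcal{F}\}$ collects exactly the ground-set elements associated with user $i$. Hence constraint (\ref{prob_MNO}a), namely $s\sum_{f} x_{i,f} \le c_i$, becomes $s\,|A \cap S_i| \le c_i$, i.e. $|A \cap S_i| \le c_i/s$. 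Since $|A\cap S_i|$ is a nonnegative integer, the natural integralization is $|A\cap S_i| \le \lfloor c_i/s\rfloor$; the subtlety — which I expect to be the crux — is that at integer multiples the stated $c^n_i = \lceil c_i/s - 1\rceil$ is strictly smaller than $\lfloor c_i/s\rfloor$. I would resolve this by appealing to the pricing model: since $C^P_i(v_i)=\left(\frac{a_i}{c_i-v_i}+\rho_i\right)v_i \to \infty$ as $v_i \to c_i$, any finite-cost placement must satisfy $v_i = s\,|A\cap S_i| < c_i$ strictly. Integralizing the strict inequality $|A\cap S_i| < c_i/s$ yields precisely $|A\cap S_i| \le \lceil c_i/s - 1\rceil$, and the outer $\max(0,\cdot)$ enforces nonnegativity in the degenerate case $c_i < s$. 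This gives $|A\cap S_i| \le c^n_i$ for every $i\in\mathcal{D}$, i.e. $A \in \mathcal{I}$.

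Next I would verify that $\mathcal{I}$ is the independent-set family of a matroid. The key structural observation is that $S = \bigcup_{i\in\mathcal{D}} S_i$ is a \emph{disjoint} partition: each element $y_{j,f}$ lies in $S_j$ and in no other block, so $\mathcal{I}$ is a partition matroid with capacities $c^n_i$. Concretely I would check the three axioms. Nonemptiness holds because $|\emptyset \cap S_i| = 0 \le c^n_i$. The hereditary property holds because $B \subseteq A$ implies $|B\cap S_i| \le |A\cap S_i| \le c^n_i$ for all $i$, so $B\in\mathcal{I}$. For the exchange axiom, given $A,B \in \mathcal{I}$ with $|A| < |B|$, disjointness of the blocks gives $\sum_i |A\cap S_i| < \sum_i |B\cap S_i|$, so some block $i_0$ has $|A\cap S_{i_0}| < |B\cap S_{i_0}| \le c^n_{i_0}$; picking any $e \in (B\setminus A)\cap S_{i_0}$ (nonempty since its count strictly exceeds that of $A$) and adding it to $A$ raises only $|A\cap S_{i_0}|$ by one, to at most $c^n_{i_0}$, while leaving the other blocks unchanged, so $A\cup\{e\}\in\mathcal{I}$.

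I expect the matroid axiom check to be the routine part — it follows immediately once the partition structure is noted — while the genuinely delicate step is the first one: justifying the ceiling expression $c^n_i = \max\left(0, \lceil c_i/s - 1\rceil\right)$ rather than the naive floor $\lfloor c_i/s\rfloor$. This hinges on the singularity of the payment function at $v_i = c_i$, which turns the nominal inequality $\le c_i$ into an effective strict inequality and thereby removes the extra slot available at integer multiples of $s$.
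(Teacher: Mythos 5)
Your proposal is correct and follows essentially the same route as the paper: both hinge on the singularity of the payment $C^P_i(v_i)$ at $v_i=c_i$ to turn the nominal bound into an effective strict inequality, whose integralization gives $c^n_i=\max\left(0,\left\lceil \frac{c_i}{s}-1\right\rceil\right)$, after which $\mathcal{I}$ is recognized as a partition matroid. The only difference is that you verify the matroid axioms explicitly, whereas the paper simply cites the partition-matroid structure.
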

\begin{proof}
Based on Eq. (\ref{price}), if the required cache storage of user $i \in \mathcal{D}$ equals $c_i$, i.e., $s \sum \limits_{f \in \mathcal{F}} x_{i,f}=c_i$, the payment to user $i$ goes to $\infty$. Since the objective is to minimize the overall cost, it is impossible to have $s \sum \limits_{f \in \mathcal{F}} x_{i,f}=c_i$, which also fits the practical scenario that the mobile user does not share all the available storage for caching. Considering that  the number of files cached at user $i$ is an integer, constraint (\ref{prob_MNO}a) can be rewritten as (\ref{I}). Then, the tuple $\mathcal{M}=(S,\mathcal{I})$ is a kind of typical matroids, i,e., the partition matroid \cite{submodular2}.
\end{proof}
Then, the cost minimization problem (\ref{prob_MNO}) is equivalent to the following maximization problem
\begin{align} \label{prob_sub}
\max \limits_{Y \in \mathcal{I}} \quad g(Y)=\theta - Q \left(P^c(Y) \right) - \sum \limits_{i \in \mathcal{D}} C^A_i(|Y \cap S_i|),
\end{align}
where the proportion of requested files served via cellular links is rewritten as a set function, given as
\begin{equation}
P^c(Y)=\frac{1}{N_u} \sum \limits_{i \in \mathcal{D}} \sum \limits_{f \in \mathcal{F}} p_f \exp \left(  -\sum \limits_{j \in \mathcal{D}, y_{j,f} \in Y  } T^d \lambda_{i,j}\right),
\end{equation} 
the payment to user $i \in \mathcal{D}$ is rewritten as
\begin{align} \label{pay}
& C^A_i(|Y \cap S_i|)= \notag \\
& \left\{
\begin{array}{ll}
\vspace{0.05in} C^P_i\left(s |Y \cap S_i|\right) & \text{if } |Y \cap S_i| \le c_i^n, \\
\left[  C^P_i  \left( s c^n_i\right) - C^P_i\left(s \cdot \max(0,c^n_i -1)\right)\right]  & \multirow{2}{*}{if $|Y \cap S_i| > c_i^n$,} \\
	\times \left( |Y \cap S_i|- c^n_i\right) + C^P_i\left(s c_i^n\right)&
\end{array}
\right.
\end{align}
and $\theta=Q(1)+ \sum \limits_{i \in \mathcal{D}} C^A_i \left( N_f \right)$ is a constant to guarantee that $g(Y)$ is a non-negative function. Note that, rewriting the payment as in (\ref{pay}) is to cover the domain of the ground set $S$, including the subsets of $S$ which is not in $\mathcal{I}$. Then, Lemma \ref{submodular} proves the submodularity of $g(Y)$ in (\ref{prob_sub}).
\begin{lem} \label{submodular}
	The objective function in (\ref{prob_sub}), i.e., $g(Y)$, is a non-negative submodular set function.
\end{lem}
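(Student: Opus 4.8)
The plan is to exploit the additive structure of \(g\). Since \(\theta\) is a constant and the family of submodular set functions is closed under addition, it suffices to prove separately that the service-cost term \(Y \mapsto -Q(P^c(Y))\) and the payment term \(Y \mapsto -\sum_{i \in \mathcal{D}} C_i^A(|Y \cap S_i|)\) are submodular; the constant merely shifts \(g\). Non-negativity is then immediate from the choice \(\theta = Q(1) + \sum_{i \in \mathcal{D}} C_i^A(N_f)\): since \(P^c(Y)\) is a proportion it satisfies \(P^c(Y) \le 1\) (each exponential factor lies in \((0,1]\) and \(\sum_f p_f = 1\)), so monotonicity of \(Q\) gives \(Q(P^c(Y)) \le Q(1)\); and \(|Y \cap S_i| \le N_f\) with monotonicity of \(C_i^A\) gives \(\sum_i C_i^A(|Y \cap S_i|) \le \sum_i C_i^A(N_f)\). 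Hence \(g(Y) \ge 0\) for every \(Y \subseteq S\).

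I would first dispose of the payment term, which is the routine part. Each summand depends on \(Y\) only through the cardinality \(|Y \cap S_i|\), so I write it as \(h_i(|Y \cap S_i|)\) with \(h_i\) the integer-valued function defined in (\ref{pay}). The key is that \(h_i\) is convex: on \(\{0,1,\dots,c_i^n\}\) it equals \(C_i^P(sk)\), and \(C_i^P(v) = (a_i/(c_i - v) + \rho_i)\,v\) is convex and increasing on \([0,c_i)\), so the sampled sequence \(k \mapsto C_i^P(sk)\) is convex; beyond \(c_i^n\) the function is extended linearly with slope equal to its last finite difference, which preserves convexity. For a set function of the form \(Y \mapsto h_i(|Y \cap S_i|)\) with \(h_i\) convex, the marginal of adding an element of \(S_i\), namely \(h_i(t+1) - h_i(t)\) with \(t = |Y \cap S_i|\), is non-decreasing in \(t\), so the function is supermodular (elements outside \(S_i\) give zero marginal). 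Summing over \(i\) preserves supermodularity, and negating yields submodularity of the payment term.

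The service-cost term \(Q \circ P^c\) is the main obstacle, since it is a composition. I would first show that \(P^c\) is itself supermodular and monotone non-increasing. Fixing a pair \((i,f)\), the associated summand is \(\exp(-\sum_{j:\,y_{j,f} \in Y} T^d \lambda_{i,j}) = \prod_{j:\,y_{j,f} \in Y} \alpha_{i,j}\) with \(\alpha_{i,j} = \exp(-T^d \lambda_{i,j}) \in (0,1]\); adding \(y_{j,f}\) multiplies the current positive value by \(\alpha_{i,j}\), so its marginal equals (current value)\(\times(\alpha_{i,j}-1) \le 0\), which is non-decreasing as \(Y\) grows because the value shrinks while the factor is negative. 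Thus each term is supermodular and non-increasing, and the non-negative combination \(P^c(Y) = \tfrac{1}{N_u}\sum_{i,f} p_f(\cdots)\) inherits both properties.

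The remaining and most delicate step is to propagate supermodularity through \(Q\). For \(A \subseteq B\) and \(e \notin B\), set \(a = P^c(A)\), \(a' = P^c(A \cup \{e\})\), \(b = P^c(B)\), \(b' = P^c(B \cup \{e\})\); monotonicity gives \(a \ge a' \ge b'\) and \(a \ge b\), while supermodularity of \(P^c\) gives \(a - a' \ge b - b' \ge 0\). The target inequality \(Q(a) - Q(a') \ge Q(b) - Q(b')\) then follows by writing each side as an integral of \(Q'\) over the respective interval and using that \(Q' \ge 0\) is non-decreasing (as \(Q\) is convex and increasing): the interval \([a',a]\) is at least as long as \([b',b]\) and lies to its right, so its integral of the non-decreasing \(Q'\) dominates. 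This gives submodularity of \(-Q(P^c(Y))\); adding it to the payment term and the constant \(\theta\) completes the argument. I expect this composition step to be the crux, whereas the payment term and the non-negativity are comparatively mechanical.
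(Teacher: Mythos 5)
The paper does not actually supply a proof of this lemma (it is ``omitted due to space limitation''), so there is nothing to compare against; judged on its own, your argument is correct and complete. The decomposition into the payment term and the service-cost term is the natural one, the convexity-of-$h_i$ argument for the payment term (including the linear extension beyond $c_i^n$, whose slope equals the last finite difference and hence preserves convexity of the sequence) is right, and the product form $\prod_{j:\,y_{j,f}\in Y}\alpha_{i,j}$ correctly yields supermodularity and monotonicity of $P^c$ (this remains valid even when a factor equals $0$, which happens because $\lambda_{i,i}=+\infty$ makes the term vanish once user $i$ caches $f$; your phrase ``current positive value'' should read ``current non-negative value''). Two small points of rigor in the composition step: the intervals $[a',a]$ and $[b',b]$ need not be disjoint, so ``lies to its right'' should be replaced by the statement that only the left endpoints satisfy $a'\ge b'$, which together with $a-a'\ge b-b'$ and $Q'$ non-negative and non-decreasing gives $\int_{a'}^{a}Q'\ge\int_{a'}^{a'+(b-b')}Q'\ge\int_{b'}^{b}Q'$; and since $Q$ is only assumed convex, it may fail to be differentiable everywhere, so it is cleaner to avoid $Q'$ altogether and invoke the increment inequality $Q(x_2+h)-Q(x_2)\ge Q(x_1+h)-Q(x_1)$ for $x_1\le x_2$, $h\ge0$, applied with $x_1=b'$, $x_2=a'$, $h=b-b'$, followed by monotonicity of $Q$ to pass from $Q(a'+(b-b'))$ to $Q(a)$. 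With those cosmetic repairs the proof stands.
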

\begin{proof}
	Due to space limitation, the proof is omitted. 
\end{proof}
Based on Lemmas \ref{matroidc} and \ref{submodular}, problem (\ref{prob_sub}) is to maximize a non-negative submodular function over a matroid constraint.
\subsection{Local Search Algorithm}
To solve a non-negative submodular maximization problem over a matroid constraint, the local search algorithm provides a sub-optimal solution, which is at least $\left(\frac{1}{4+\epsilon}\right)$ of the optimal solution \cite{lee2010maximizing}. Fractional local search algorithms can provide solutions with higher approximation ratios. However, considering that the size of the ground set is $\left(N_u  N_f \right)$, the fractional local search algorithms are impractical due to the high computation complexity. Furthermore, in the simulation, we will show that the performance of the local search algorithm is close to that of the optimal one.
\begin{algorithm}[!h]
\caption{The Local Search Procedure}
\label{alg:lsp}
\begin{algorithmic}[1]
\REQUIRE{$\left( V, \mathcal{I}, g \right)$ including a set $V$, a matroid constraint $\mathcal{I}$, a non-negative submodular set function $g(Y)$}
\ENSURE{A solution $Y^\star$ corresponding to $\arg\max \limits_{Y \in \mathcal{I}, Y \subseteq V} g(Y)$}
\STATE {Select $y_{j^{\star},f^{\star}}=\arg \max \limits_{\{y_{j,f}\} \in \mathcal{I} \text{ and } y_{j,f} \in V } {g(\{y_{j,f}\} )}$ and set $Y=\{y_{j^\star,f^\star}\} $.}
\WHILE{any of the following local improvements applies }
\STATE {Add operation:}
\IF {$\exists y_{j,f} \in V \backslash Y $ such that  $Y \cup \{y_{j,f}\} \in \mathcal{I}$ and $g \left( Y\cup\{y_{j,f}\} \right) \ge \left( 1+ \frac{\epsilon}{N_u^4 N_f^4}\right) g \left( Y \right) $}
\STATE {Set $Y=Y \cup \{y_{j,f}\} $.}
\ENDIF
\STATE {Delete operation:}
\IF {$\exists y_{j,f} \in Y $ such that  $g \left( Y \backslash \{y_{j,f}\} \right) \ge \left( 1+ \frac{\epsilon}{N_u^4 N_f^4}\right) g \left( Y \right) $}
\STATE {Set $Y=Y \backslash \{y_{j,f}\} $.}
\ENDIF
\STATE {Swap operation:}
\IF {$\exists y_{j,f} \in V\backslash Y $ and $y_{j',f'} \in Y $ such that  $Y\backslash \{y_{j',f'}\} \cup \{y_{j,f}\} \in \mathcal{I}$ and $g \left(Y\backslash \{y_{j',f'}\} \cup \{y_{j,f}\}  \right) \ge \left( 1+ \frac{\epsilon}{N_u^4 N_f^4}\right) g \left( Y \right) $}
\STATE {Set $Y=Y\backslash \{y_{j',f'}\} \cup \{y_{j,f}\} $.}
\ENDIF
\ENDWHILE
\end{algorithmic}
\end{algorithm}

\begin{algorithm}[!h]
\caption{The Local Search Algorithm}
\label{alg:lsa}
\begin{algorithmic}[1]
\REQUIRE{$\left( S, \mathcal{I}, g \right)$ including a ground set $S$, a matroid constraint $\mathcal{I}$, a non-negative submodular function $g(Y)$.}
\ENSURE{A solution $Y^\star$ corresponding to $\arg\max \limits_{Y \in \mathcal{I}, Y \subseteq S} g(Y)$}
\STATE {Set $V_1=S$.}
\FOR{ $i =1,2$ }
\STATE{Apply the Algorithm \ref{alg:lsp} with input $\left( V_i, \mathcal{I}, g \right)$ and get the output solution $Y^\star_i$.}
\STATE{Set $V_{i+1}=V_i \backslash Y^\star_i$.}
\ENDFOR
\STATE{Return the solution $Y^\star=\arg \max \left( g(Y^\star_1), g(Y^\star_2) \right)$.}
\end{algorithmic}
\end{algorithm}

As listed in Algorithm \ref{alg:lsp}, the local search procedure finds a solution of problem (\ref{prob_sub}) on a set, i.e., $V$. There are three local improvement operations, including adding an element, deleting an element and swapping a pair of elements. The procedure starts with selecting an element $y_{j,f} \in V$ which maximizes  $g(\{y_{j,f}\} )$ subjecting to $\{ y_{j,f} \} \in \mathcal{I}$, denoted as $y_{j^\star,f^\star}$, and initializes the solution set as  $Y=\{y_{j^\star,f^\star}\}$. Then, the solution is improved via the three operations, until none of them can improve the objective value by more than $\frac{\epsilon}{N_u^4 N_f^4}$. The local search algorithm in Algorithm \ref{alg:lsa} is a $\left( \frac{1}{4+\epsilon} \right)$-approximate algorithm of problem (\ref{prob_sub}). It first applies Algorithm \ref{alg:lsp} on the ground set $S$ and gets a corresponding solution $Y^\star_1$. Then, Algorithm  \ref{alg:lsp} is applied on the set $S \backslash Y^\star_1$ and a corresponding solution $Y^\star_2$ is obtained. Finally, the approximate solution $Y^\star$ is obtained as the optimal one of $Y^\star_1$ and $Y^\star_2$.
It is proved in  \cite{lee2010maximizing} that, if $\frac{1}{\epsilon}$ is no more than a polynomial of $\left(N_u N_f\right)$,  Algorithm \ref{alg:lsp} can run in polynomial time.
\section{Simulation Results}
In this section, we provide simulation results to validate the effectiveness of the local search algorithm and the incentive-based mobility-aware caching strategy. The following four strategies are compared:
\begin{enumerate}
\item{Optimal mobility-aware caching strategy:} it applies the optimal solution of problem (\ref{prob_MNO}) obtained by a dynamic programming (DP) algorithm. The details of the algorithm are omitted due to space limitation.
%Specifically, for a fixed setting of cached storage of each user, the optimal caching placement can be acquired by the DP algorithm in \cite{mobilitycaching}. Then, the optimal solution is obtained by considering all possible cache storages for all users via exhaustive search.
\item{Sub-optimal mobility-aware caching strategy:} it applies the sub-optimal solution of problem (\ref{prob_MNO}) obtained by the local search algorithm in Section IV, where $\epsilon=0.01$. 
\item{Popular caching strategy:} each user caches the most popular files.
\item{Random caching strategy:} the probability of each user to cache a particular file is proportional to the file request probability.
\end{enumerate}
In both the popular and random caching strategies, the cache storage of each mobile user is assumed to be the same, and the optimal cache storage is obtained by line search. The file request probability follows a Zipf distribution with parameter $\gamma_r$, i.e.,  $p^r_{i,f}=\frac{f^{-\gamma_r}}{\sum \limits_{k \in \mathcal{F}} k^{-\gamma_r}}$, $i \in \mathcal{D}$.  The requests are more concentrated on the popular files with a larger $\gamma_r$. We assume the service cost function $Q(\cdot)$ to be a linear function, the slop of which is set as $0.01$ \$/MB based on the charges of AT\&T's and Verizon's plans \cite{joe2013offering}. The size of each file is assumed to be $200 \text{ MB}$, and each user requests one file per day on average. Thus, we have $Q(P^c)=0.01 \times 200 N_u P^c$, denoting the service cost per day. The overall cost is normalized by the service cost without caching, i.e., $Q(1)$. The available storage of each user is assumed to be $1$ GB. Meanwhile, the parameters $\rho_i$, $b_i$ and $a_i$, $i \in \mathcal{D}$, are set as $0$, $100$ and $0.015/\ln(100)$ \$\footnote{According to Google Drive, the price of online storage is $1.99$ \$ per month for $100$ GB. The price of the storage hardware on the mobile phone is roughly $20$ times higher than that of the online HDD storage, which is estimated by the storage price of iPhone and the price of HDD on Amazon. The price of the storage on the mobile phones is therefore approximately $0.015$ \$/GB per day. Assuming $b_i=100$ and $u_i^c(1)=0.015$ \$, we can get the value of $a_i$ based on Eq. (\ref{uc}).}, respectively.

\begin{figure}[t]
	\centering
	\includegraphics[width=2.6in]{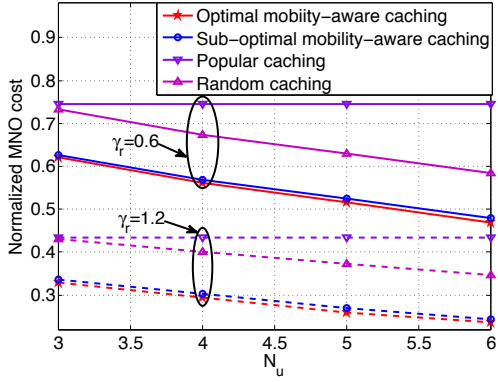}
	\caption{Comparison of different caching strategies with $N_{file}=50$, and $T^d=300s$.}
	\label{fig1}
\end{figure}
\begin{figure}[t]
	\centering
	\includegraphics[width=2.7in]{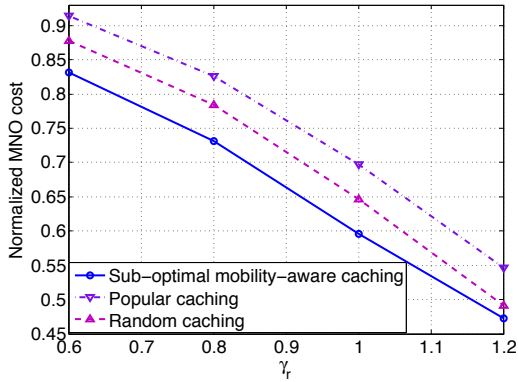}
	\caption{Comparison of different caching strategies with $N_{file}=500$, and $T^d=300s$.}
	\label{fig2}
\end{figure}

In Fig. \ref{fig1}, the inter-contact parameters $\lambda_{i,j}$, where $i \in \mathcal{D}$ and $j \in \mathcal{D} \backslash \{i\}$, are generated according to a Gamma distribution, i.e., $\Gamma(4.43, 1/1088)$ \cite{aggregate_real}. By comparing the optimal and sub-optimal mobility-aware caching strategies, we see that the local search algorithm provides near optimal performance. Moreover, the mobility-aware caching strategies outperform both popular and random caching strategies, since the latter ones do not make good use of the user mobility information and the prediction of user response. 

Next we use a real-life data set collected by Chaintreau \emph{et. al.} in \cite{intercontactmodel} to evaluate the performance of the proposed incentive-based mobility-aware caching strategy. In this data set, the contact times among 78 participants are recorded. Same as \cite{mobilitycaching}, we estimate the parameters $\lambda_{i,j}$ by the average contact rates during the daytime of the first day and design the caching strategies. The performance in the second daytime is shown in Fig. \ref{fig2}. It is shown that the proposed mobility-aware caching strategy can better reduce the overall cost than the popular and random caching strategies.
\section{Conclusions}
In this paper, we designed an incentive mechanism for a D2D caching network while taking advantage of the user mobility pattern. A Stackelberg game was formulated, where each user plays as a follower and the MNO plays as the leader. We developed an effective local search algorithm to solve the cost minimization problem at the MNO. Simulation results showed that the local search algorithm provides near optimal performance and also validated the effectiveness of the proposed incentive-based mobility-aware caching strategy. For future works, it would be interesting to develop online caching update strategies considering time-varying user preferences and utilities.

%\begin{figure}[!t]
%  \centering
%  \includegraphics[width=3.4in]{intercontact}
%  \caption{The timeline for an arbitrary pair of mobile users.}
%  \label{intercontact}
%\end{figure}

%\section{System Model and Performance Metric}

%\subsection{User Mobility Model}

%\begin{defn}
%\end{defn}

%\section{conclusions}

\bibliographystyle{IEEEtran}
\bibliography{IEEEabrv,report}
\end{document}